\newtheorem{Def}{\em{Definition}}
\newtheorem{Theorem}{\em{Theorem}}
\newtheorem{Remark}{\em{Remark}}
\newcommand{\NP}{{\cal NP}}
\newenvironment{proof}{\noindent {\bf Proof. }}{$\blacksquare$ \vskip 1em}
\def \pbGen {$P|pmtn|f$}
\def \pbGenRelDates {$P|pmtn,r_j|f$} 
\def \NP {{$\cal NP$}}
\def \P {{$\cal P$}}
\def \FS {$PFS-like$}
\begin{document}

\title{
New complexity results for parallel identical machine scheduling problems with preemption, release dates and regular criteria
} 

\author[emn]{D. Prot\corref{cor}}
\ead{damien.prot@mines-nantes.fr}
\author[emn]{O. Bellenguez-Morineau}
\ead{odile.morineau@mines-nantes.fr}
\author[emn]{C. Lahlou}
\ead{chams.lahlou@mines-nantes.fr}

\cortext[cor]{Corresponding author}

\address[emn]{ {\em{LUNAM Universit\'e, \'{E}cole des Mines de Nantes, IRCCyN UMR CNRS 6597 
(Institut de Recherche en Communication et en Cybern\'{e}tique de Nantes),
4 rue Alfred Kastler - La Chantrerie BP20722 ; 
F - 44307 NANTES Cedex 3 - FRANCE}}}

\date{Received: date / Accepted: date}

\begin{abstract} 
In this paper, we are interested in parallel identical machine scheduling problems with preemption and release dates in case of a regular criterion to be minimized. 
We show that solutions having a permutation flow shop structure are dominant if there exists 
an optimal solution with completion times scheduled in the same order as the release dates, or if there is no release date. 
We also prove that, for a subclass of these problems,
the completion times of all jobs can be ordered in an optimal solution. Using these two results, we provide
new results on polynomially solvable problems and hence refine the boundary between
$\cal P$ and \NP\ for these problems.
\end{abstract}

\begin{keyword}
Scheduling \sep Identical machines \sep Preemptive problems \sep Dominant structure \sep Agreeability \sep Common due date
\end{keyword}

\maketitle

\section{Introduction}


\subsection{Definition of the problem}
The problem considered in this paper can be expressed as follows:
there are $n$ independent jobs $1, 2, \ldots, n$ and $m$ identical machines $M_1, M_2, \dots,M_m$.
Each job has a processing time $p_j$ and can be processed on any machine,
but only on one machine at a time. Preemption is allowed, meaning that 
a job can be interrupted and resumed on an other machine. There exists a release date $r_j$ for each job $j$, i.e. no job can start before its release date.
We are interested in minimizing a regular  (i.e. non-decreasing) function $f$. Usually, $f$ is either $\sum f_j$ or $\max f_j$, $f_j$ being any regular function of the completion time $C_j$ of job $j$.
Using the standard scheduling classification (\cite{GLLR89}), this problem is denoted \pbGenRelDates.

\subsection{Related works}
Parallel identical machine scheduling problems are one of the most studied topic in scheduling theory.
For complexity results, the authors may refer to the websites maintained by~\cite{D11} and~\cite{BK11}.
A very recent survey on parallel machine problems with equal processing times, with or without preemption,
is produced by~\cite{KW11}. 
For different classical criteria, setting equal processing times makes a problem become polynomial-time solvable.
For example, \cite{BBCDKS07} prove that $P|pmtn, r_j, p_j=p|\sum C_j$ can be solved in polynomial-time whereas
$P|pmtn, r_j|\sum C_j$ is \NP-Hard (\cite{DLY90}).
For the total number of late jobs $\sum U_j$ criteria, the exact same behavior happens:
\cite{BBKT04} prove that $P|pmtn,p_j=p|\sum U_j$  is polynomial-time solvable,
and $P|pmtn|\sum U_j$  is \NP-Hard (see \cite{L83}).
For the total tardiness $\sum T_j$,
$P|pmtn, p_j=p|\sum T_j$ is solvable in polynomial-time (see \cite{BBKT04}) but
$P|pmtn|\sum T_j$ is \NP-Hard (see \cite{KW13}).
Adding weights on criteria make problems much more difficult since 
$P2|pmtn|\sum w_j T_j$ and $P|pmtn,p_j=p|\sum w_jU_j$ are \NP-Hard (see \cite{BCS74} and \cite{BK06}) 
Note that the complexity status of $P2|pmtn, p_j=p|\sum w_j T_j$ is still open.
Finally, by using linear programming techniques, results provided in \cite{LL78}, and also in \cite{BCSW76} and \cite{S81},
imply that $P|pmtn,r_j|L_{max}$ is solvable in polynomial-time.
For a survey on mathematical programming formulations in machine scheduling, the reader can refer to~\cite{BDW91}.


\subsection{Contribution of the paper}

\cite{BBCDKS07} proved the existence of a dominant structure which allows to solve the problem $P|pmtn, r_j, p_j=p|\sum C_j$ in polynomial-time by linear programing.
In this paper, we extend this result to more general criteria and to some problems with non identical processing times, which implies new polynomial-time results, such as, to mention a few, for problems $P|pmtn, d_j=d|\sum T_j$, or $P|pmtn, r_j, p_j=p, d_j=d|\sum w_jU_j$.

More precisely, in section~\ref{sec:PFS}, we provide a dominant structure for all the problems of the form \pbGenRelDates\
for which there exists an optimal solution such that the completion times follow the same order as the release dates.
Note that our result implies that, for any problem of the type \pbGen\ (i.e. without release dates), 
the structure is dominant. 
We also prove that, if we are able to compute an order between optimal completion times of the different jobs, 
we can solve these problems in polynomial-time by linear programming. 
Section~\ref{sec:order} is dedicated to finding problems for which such an order exists. 
We discuss the implications of our results in term of complexity on classical criteria in section~\ref{sec:discussion}
and make some conclusions in section~\ref{sec:conclu}.

\section{A dominant structure}\label{sec:PFS}

We are looking at solutions having a \textit{Permutation Flow Shop-like} structure. In order to define this kind of schedules, we introduce some notations and concepts.

A \textit{piece} of a job is a part of the job that is scheduled without interruption. We say that a job $j$ is \textit{processed at time $t$} if there is a machine on which a piece of $j$ starts at time $t_1 \leq t$ and ends at time $t_2 > t$.
We denote by $C(j, t)$ (resp. $M(j,t)$) the completion time (resp. machine) of $j$ when it is processed at time $t$. For any time $t$, $J(t)$ denotes the set of jobs processed at time $t$.

A \textit{non-delay} schedule (called originally ``permissible left shift'', see \cite{GT60}) is such that, if a machine is idle during a time interval $[t,t+\epsilon[$ ($\epsilon > 0$), no piece of length $\epsilon' \leq \epsilon$ of a job processed at a time $t' > t$ can be processed at time $t$.

We also define a \textit{vertically ordered} schedule in the following manner: 
at any time $t$, if $J(t) = \{ j_1, j_2, \ldots, j_k \}$ and $j_1 < j_2 < \cdots j_k$,  we have $M(j_i,t)=M_i$, $i=1,2, \ldots, k$.

{Figure \ref{fig:NonDelayVerticalOrder} illustrates these properties: job $j_2$ verifies the non-delay property since none of its pieces can be processed earlier. On the contrary, the property is not verified by job $j_1$ because one piece can be scheduled at time $1$.
The schedule is vertically ordered during time interval $[3,4]$, but it is not during time interval $[2,3]$.
}

\begin{figure}[htbp]
\begin{center}
\includegraphics[scale=0.4]{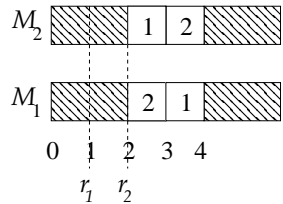}
\caption{Non-delay and vertical order properties}
\label{fig:NonDelayVerticalOrder}
\end{center}
\end{figure}

\begin{Remark}\label{Rem:NonDelay-VerticalOrder} When preemption is allowed, non-delay schedules are dominant for regular criteria since it is always possible to move a piece of a job to an earlier idle time interval without increasing the objective function. 
Moreover, vertically ordered schedules are also dominant since the completion times of the jobs remain the same if the pieces of jobs scheduled in the same time interval are reassigned to processors in order to respect the vertical order.
\end{Remark}

We assume without loss of generality that $r_1\leq r_2\leq \dots \leq r_n$ for the remainder of the paper.
Now, let us characterize the structure:

\begin{Def}
A schedule is said to be Permutation Flow Shop-like (\FS) if

\begin{enumerate}
\item it is vertically ordered,
\item no machine processes more than one piece of each job,
\item the scheduling order on the different machines is the same.
\end{enumerate}

\end{Def}

An example of \FS\ schedule is given in Figure~\ref{fig:PFSGeneral}.

\begin{figure}[htbp]
\begin{center}
\includegraphics[scale=0.4]{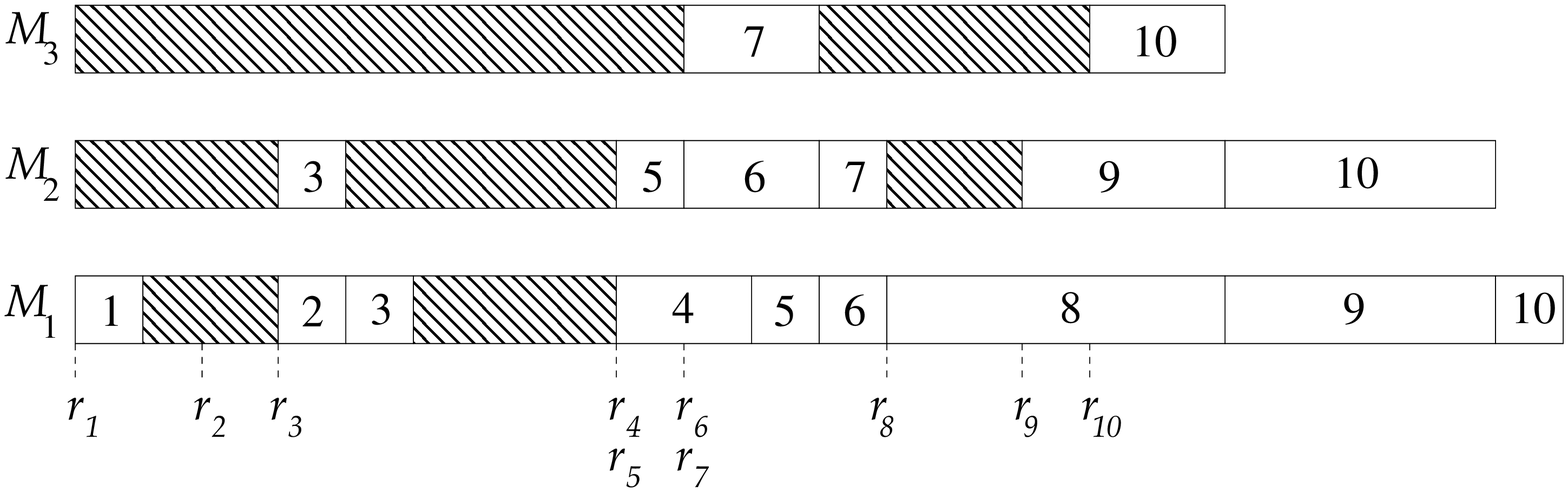}
\caption{A \FS\ schedule for \pbGenRelDates }
\label{fig:PFSGeneral}
\end{center}
\end{figure}

It is interesting to mention that for the problem $P|pmtn,r_j,p_j=p|\sum C_j$,
\cite{BBCDKS07} show that a similar structure is dominant.
This result is very specific since it deals with equal processing times and the
considered objective function is the total completion time.
In this paper, we prove the existence of such a structure for more general problems. 

Now, we  express our central result:

\begin{Theorem}\label{Theorem:PFS}
If \pbGenRelDates\ has a solution $S$ with completion times $C_1 \leq C_2 \leq \dots \leq C_n$, there exists a non-delay \FS\ solution $S'$ such that $C_1'\leq C_2' \leq \dots \leq C_n'$ and  $C_j' \leq C_j$ for $1 \leq j \leq n$.
\end{Theorem}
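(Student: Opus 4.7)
I would construct $S'$ from $S$ in two successive transformations, each of which does not increase any completion time.

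\emph{Phase 1 (normalization).} Using the two operations of Remark~\ref{Rem:NonDelay-VerticalOrder}, I first reassign, at each time $t$, the jobs of $J(t)$ to the machines $M_1,\dots,M_{|J(t)|}$ in increasing index order; this does not change any completion time and yields a vertically ordered schedule. Then I perform all feasible left shifts to obtain a non-delay schedule; this only decreases completion times, and the ordering $C_1\le\cdots\le C_n$ is preserved because, by $r_1\le\cdots\le r_n$, the earliest $C_j$ attainable by left-shifting is bounded below by the prefix load of jobs $1,\dots,j$, which is non-decreasing in $j$.

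\emph{Phase 2 (PFS-like structure).} In the normalized schedule, $M_i$ runs the rank-$i$ element of $J(t)$ at every time $t$. My key claim is that on each machine, consecutive pieces correspond to jobs with strictly increasing indices. This simultaneously yields Property~2 (no job appears twice on a single machine) and Property~3 (the common scheduling order across machines is $1,\dots,n$) of Definition~\ref{Def}. To prove the claim, suppose on some $M_i$ a piece of job $a_k$ is immediately followed by a piece of $a_{k+1}<a_k$ at an event time $t$. This forces some $j^*<a_k$ to enter $J(t)$ at $t$, and by non-delay this can only happen when another job $j'$ leaves $J(t)$; the rank change further requires $j'>a_k$. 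If $j'$ leaves by completion, then $C_{j'}=t$, while $j^*$ (which was not being processed just before $t$) has $C_{j^*}\ge t+p_{j^*}>t=C_{j'}$, contradicting $j^*<j'\Rightarrow C_{j^*}\le C_{j'}$.

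\emph{Main obstacle.} The contradiction above implicitly assumes that $j'$ leaves $J(t)$ by completing rather than by \emph{voluntary} preemption; in the latter case $C_{j'}>t$ and the inequality chain collapses. I would therefore precede Phase~1 by a preliminary reduction that eliminates voluntary preemption from $S$ without altering any completion time, by merging the successive pieces of any such preempted job via exchange with the intervening pieces of other jobs scheduled on the same machine. The hypotheses $r_1\le\cdots\le r_n$ and $C_1\le\cdots\le C_n$ guarantee feasibility of each exchange (the pieces being swapped forward have already been released, and no completion time is pushed past that of a higher-indexed job). Once voluntary preemption is removed, the argument of Phase~2 applies and yields the desired non-delay PFS-like solution $S'$ with ordered completion times $C'_j\le C_j$.
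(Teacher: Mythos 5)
There is a genuine gap, and it sits exactly where you located your ``main obstacle.'' Your Phase~2 claim --- that after making the schedule vertically ordered and non-delay, consecutive pieces on each machine automatically carry increasing job indices --- is simply false, and no preprocessing can rescue it. Counterexample on one machine: $r_1=r_2=0$, $p_1=p_2=1$, schedule job $2$ on $[0,0.5)$, job $1$ on $[0.5,1.5)$, job $2$ on $[1.5,2)$; this is non-delay, vertically ordered, and satisfies $C_1\leq C_2$, yet the pieces on $M_1$ are not in increasing index order. So the \FS\ structure cannot be obtained by normalization plus a structural claim; some pieces must actually be moved. Your proposed patch --- a preliminary pass that eliminates all ``voluntary preemption'' without altering any completion time --- is impossible in general. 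Take two machines and three jobs of length $2$ with a common deadline $3$ (say minimizing $C_{\max}$): every schedule meeting the vector $C_1=2$, $C_2=C_3=3$ must interrupt some job and resume it later, since without such interruptions the best makespan is $4$. Indeed, the target \FS\ schedules themselves contain jobs that leave $J(t)$ before completing and resume later on a lower-indexed machine (this is precisely constraint~(\ref{eqn:flowshop}) of the LP, which allows a gap between a job's pieces on $M_{l+1}$ and $M_l$). So the very phenomenon you try to legislate away is present in the object you are trying to construct, and your merging step would in any case displace intervening pieces of other jobs and risk increasing their completion times, breaking $C'_j\leq C_j$.

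What is missing is an explicit exchange argument. The paper's proof proceeds by induction on the job index $j$: it locates the earliest time $t$ at which a job $k>j$ starts on some machine before a later piece of $j$ (starting at $t'>t$), uses the non-delay property together with $r_j\leq r_k\leq t$ to conclude all machines are busy on $[t,t')$, and then --- crucially --- uses a counting argument ($|J(t)|=m$ versus $|J(t')\setminus\{j\}|\leq m-1$) plus the hypothesis $C_j\leq C_l$ to find a job $l$ processed at $t$ but not at $t'$ that still has a piece starting after $t'$. A $\delta$-length piece of $l$ at $t$ is then swapped with a $\delta$-length piece of $j$ at $t'$; the later piece of $l$ guarantees that $C_l$ does not move, only $C_j$ can decrease, and the non-delay and vertical-order properties are restored by reassignment. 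Iterating strictly advances $t$, so the process terminates. This swap, justified by the ordering hypothesis $C_j\leq C_l$, is the key idea your proposal lacks; without it the problematic configurations (a lower-indexed job preempted in favour of a higher-indexed one) cannot be removed.
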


\begin{proof}
By Remark \ref{Rem:NonDelay-VerticalOrder}, we assume without loss of generality that $S$ is a non-delay schedule which is vertically ordered.
Let us define the following two properties:

\begin{description}
\item[$A(j)$:] If $i$ and $i'$ are two jobs such that $1 \leq i \leq j $ and $i<i'$ no machine processes a piece of job $i'$ before a piece of job $i$.
\item[$B(j)$:] “No machine processes more than one piece of job $i$, for $1 \leq i \leq j$.
\end{description}

If $A(n)$ and $B(n)$ are true, $S$ is a non-delay \FS\ solution. 
Otherwise, we prove by induction on the job number that $S$ can be transformed into a non-delay and vertically ordered schedule $S'$ such that $A(n)$ and $B(n)$ are true.

\begin{figure}[htbp]
\begin{center}
\includegraphics[scale=0.65]{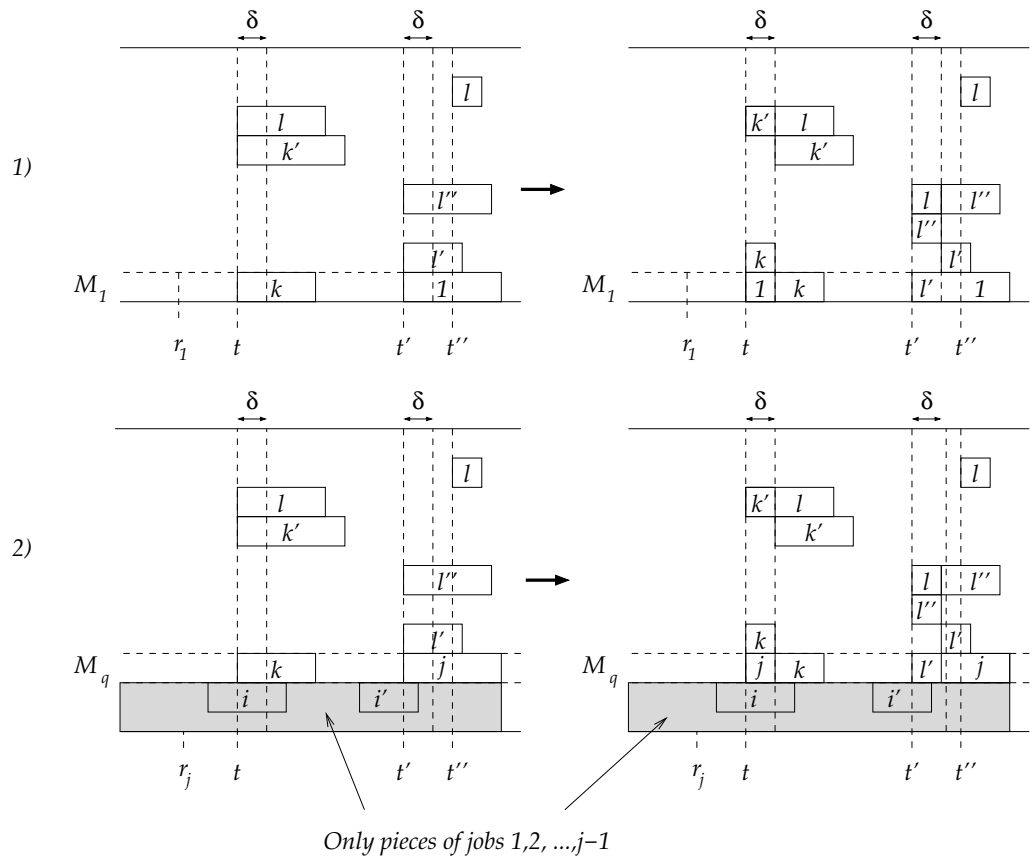}
\caption{The basic transformation used in Theorem \ref{Theorem:PFS}.}
\label{fig:PFSproof}
\end{center}
\end{figure}
{\bf Base step.}

By Remark~\ref{Rem:NonDelay-VerticalOrder}, vertically-ordered schedules are dominant and hence 
each piece of job $1$ is processed by machine $M_1$. 

If $A(1)$ is true then $B(1)$ must be true: otherwise $M_1$ processes two pieces of job $1$ that are separated by an idle time interval, which contradicts the non-delay assumption.

So, suppose $A(1)$ is false and consider the smallest $t$ such that a job $k>1$ starts on $M_1$ at $t$ before a piece of job $1$. 
Let $t'$ be the starting time of this piece of job $1$ (see case~1 of Figure \ref{fig:PFSproof}). The non-delay property implies that all machines are busy during time interval $[t,t'[$.  
Since $|J(t)| = m$, and $|J(t')- \{1\}| \leq m-1$, there is a piece of a job $l$ processed at time $t$ that is not processed at time $t'$. Moreover, since $C_1 \leq C_l$ (by definition of $S$), there is also another piece of job $l$ that starts at time $t''>t'$.
Let $\delta = \min ( t''-t', \min_{i \in J(t) } \{ C(i, t) -t \}, \min_{i \in J(t')} \{ C(i, t') -t' \} )$. We exchange the piece of  job $l$ processed during time interval $[t, t+\delta[$ with the piece of job $1$ processed during time interval $[t', t'+\delta[$. Finally, we reassign pieces of jobs processed during time interval $[t, t+\delta[$ and pieces of jobs processed during time interval $[t', t'+\delta[$ respectively, so that the schedule remains vertically ordered. 
Only the completion time of job $1$ may decrease and the new schedule is still a non-delay one. 
Moreover, there is now a piece of job $1$ processed on machine $M_1$ during time interval $[t, t+\delta[$.
If $t>r_1$, the non-delay property of $S$ and the definition of $t$ imply that there is a piece of job $1$ processed during time interval $[r_1,t[$, so there is now one single piece of job $1$ processed during time interval $[r_1, t+\delta[$ 
{by machine $M_1$. If $A(1)$ is true, so is $B(1)$. Otherwise, we consider again the smallest $\tau$ such that a job $k>1$ starts on $M_1$ at $\tau$ before a piece of job $1$: since $\tau \ge t + \delta$ and $\delta > 0$ we have $\tau > t$. Therefore, by repeating this procedure, either we do not find such a $\tau$ or we reach the end of the schedule: in both cases $A(1)$ becomes true, so does $B(1)$.}
 
{\bf Induction step.}

Now, suppose there is a non-delay and vertically ordered schedule such that $A(j-1)$ and $B(j-1)$ are true for $j \ge 2$, and assume that at least one of the properties $A(j)$ and $B(j)$ is false.


If $A(j)$ is true then $B(j)$ must be true. Indeed, suppose for the sake of contradiction that $B(j)$ is false. In this case, there is one machine $M_q$ which processes two pieces of job $j$. Let $t$ and $t'>t$ be their completion and starting times, respectively. If the machine is idle during time interval $[t, t'[$, we get a contradiction. Indeed, if $q=1$ the non-delay assumption is not verified. If $q > 1$, let $i$ be a job processed at time $t'$ on a machine $M_p$, with $p < q$: since the schedule is vertically ordered, we have $i<j$, and since $A(j-1)$ is true, no piece of job $j$ can be processed by machine $M_p$ during time interval $[r_1, t'[$. Hence, no piece of job $j$ is processed by machines $M_1, M_2, \ldots, M_{q-1}$ during time interval $[t, t'[$, and the piece of job $j$ starting at time $t'$ can start earlier at time $t$, which contradicts the non-delay assumption. 

So, suppose $A(j)$ is false, and let us consider the smallest value $t$ such that a piece of a job $k>j$ starts at time $t$, and a piece of job $j$ starts at time $t'>t$, on a machine $M_q$ (see case~2 of Figure \ref{fig:PFSproof}).
Since $k>j$ and the schedule is vertically ordered, no piece of job $j$ can be processed by machines $M_{q+1}, M_{q+2}, \ldots, M_m$ during time interval $[t, C(k,t)[$.
If $q=1$, no piece of job $j$ can be processed during this time interval. Else, let $i$ and $i'$ be the jobs processed respectively at time $t$ and $t'$ by machine $M_{q-1}$.
Since the schedule is vertically ordered, we have $i' < j$, and since $A(j-1)$ is true we have $i \leq i'$, so we get $i < j$.
Consequently, no piece of job $j$ can be processed by machines $M_1, M_2, \ldots, M_{q-1}$ during time interval $[t, C(k,t)[$. 

Since it is a non-delay schedule and $r_j \leq r_k \leq t$, all machines are busy during time interval $[t,t'[$: therefore there is a piece of a job $l>k$ processed at time $t$ that is not processed at time $t'$. Since $C_j \leq C_l$ (by definition of $S$), there is also another piece of job $l$ that starts at time $t''>t'$.
Let $\delta = \min ( t''-t', \min_{i \in J(t) } \{ C(i, t) -t \}, \min_{i \in J(t')} \{ C(i, t') -t' \} )$. We exchange the piece of job $l$ processed during time interval $[t, t+\delta[$ with the piece of job $j$ processed during time interval $[t', t' + \delta[$.
Finally, we reassign pieces of jobs processed during time interval $[t, t+\delta[$ and pieces of jobs processed during time interval $[t', t'+\delta[$ respectively, so that the schedule remains vertically ordered. 

Again, only the completion time of job $j$ may decrease, and we still have a non-delay schedule. Moreover there is a piece of job $j$ processed on machine $M_q$ during time interval $[t, t+\delta[$. If there is also a piece of job $j$ processed by $M_q$ before time $t$, it must end at time $t$ (by definition of $t$ and because of the non-delay assumption).
Hence, machine $M_q$ processes no piece of a job $k>j$ and only one single piece of job $j$ during time interval $[0, t+\delta[$. 
{Since $\delta > 0$, we can repeat the procedure, as in the "base step", and get a schedule with $A(j)$ and $B(j)$ being true.}
\end{proof}

\begin{Remark}\label{Remark:WithoutRelease}
For problems without release date, i.e. of the form \pbGen, non-delay \FS\ schedules are dominant since, by renumbering the jobs, there always exists an optimal solution such that $C_1^* \leq C_2^* \leq \dots \leq C_n^*$.
\end{Remark}

This theorem gives a very precise structure on an optimal solution for problems of the form \pbGenRelDates\
for which there exists an optimal solution $S^*$ such that $r_j < r_k \implies C_j^* \leq C_k^*$. Moreover, this structure is very interesting combined with a dominant order for the jobs' completion times, because it can lead to the
time-polynomiality of a large class of problems.
Indeed, the linear programming approach proposed by~\cite{BBCDKS07} for problem $P|pmtn, r_j, p_j=p|\sum C_j$ can be extended to regular criteria which are separable piecewise continuous linear, such as $\sum w_jT_j$ for example.

Let us modify their linear program. The value $p$ is replaced with $p_j$ in the job processing time constraints, and the objective function is replaced with a regular criterion $f$ which is a separable piecewise continuous linear function. Properties of $f$ ensure that it can be handled by a linear program (see for instance \cite{DT97}). We then get:

\begin{alignat}{4}
\nonumber \mbox{minimize } & f(C_1, C_2, \ldots,C_n) & &\\
\nonumber \mbox{subject to } & & &\\
\sum_{l=1}^{m} p_j^l & =  p_j&\;\;\;\; &\forall j = 1, \ldots,n \label{eqn:proc}\\
t_j^{l+1} + p_j^{l+1} & \leq t_j^{l}&\;\;\;\; &\forall j = 1, \ldots,n , \forall l = 1, \ldots,m-1\label{eqn:flowshop}\\
t_j^l + p_j^l & \leq t_{j+1}^l&\;\;\;\; &\forall j = 1, \ldots,n-1 , \forall l = 1, \ldots,m-1\label{eqn:prec}\\
t_j^m & \geq r_j&\;\;\;\; &\forall j = 1, \ldots,n \label{eqn:ri}\\
t_j^1 + p_j^1 & = C_j&\;\;\;\; &\forall j = 1, \ldots,n \label{eqn:Ci}\\
p_j^l, t_j^l & \geq 0&\;\;\;\; &\forall j = 1, \ldots,n, \forall l = 1, \ldots,m \label{eqn:0}\\
C_j & \geq 0&\;\;\;\; &\forall j = 1, \ldots,n \label{eqn:0-bis}
\end{alignat}

The ${\cal O}(nm)$ variables $t_j^l, p_j^l$, and $C_j$ are defined respectively as the starting time of job $j$ on machine $M_l$, the processing time of job $j$ on machine $M_l$, and the completion time of job $j$.
Equalities~(\ref{eqn:proc}) guarantee the processing time of each job, whereas inequalities~(\ref{eqn:flowshop}) 
ensure that a job is first processed on the different machines in the order $M_m$, $M_{m-1}$, $\dots$, $M_{1}$.
Inequalities~(\ref{eqn:prec}) ensure that jobs are scheduled on each machine following order $1,2, \ldots, n$.
 
\begin{Theorem}\label{Theorem:PL}
The problem \pbGenRelDates\ can be solved in polynomial time if $f$ is a separable piecewise continuous linear function computable in polynomial time and if there exists an optimal solution such that $C_1^* \leq C_2^* \leq \dots \leq C_n^*$.
\end{Theorem}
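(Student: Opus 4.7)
The plan is to combine Theorem~\ref{Theorem:PFS} with the linear program displayed just before the theorem. By the hypothesis there is an optimal schedule $S^*$ with $C_1^* \le \cdots \le C_n^*$, and the convention $r_1 \le \cdots \le r_n$ places us in the setting of Theorem~\ref{Theorem:PFS}. That theorem produces a non-delay \FS\ schedule $S'$ with $C_j' \le C_j^*$ for every $j$ and $C_1' \le \cdots \le C_n'$; since $f$ is regular, $f(S') \le f(S^*)$, hence $S'$ is optimal too. It therefore suffices to optimize over the class of non-delay \FS\ schedules with non-decreasing completion times.

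My next step is to identify this class with the feasible region of (\ref{eqn:proc})--(\ref{eqn:0-bis}). Given such a schedule, the ``no more than one piece per machine'' property lets me unambiguously name the piece of job $j$ on machine $M_l$ by its start $t_j^l$ and length $p_j^l$, taking $p_j^l = 0$ when no such piece exists. I would then check that the three \FS\ properties, together with vertical ordering and $C_1 \le \cdots \le C_n$, translate into the scheduling constraints: (\ref{eqn:proc}) says the pieces cover $p_j$; (\ref{eqn:prec}) encodes the common machine-wise order, which must be $1,2,\ldots,n$ because the last pieces sit on $M_1$ and their completion times are ordered; (\ref{eqn:flowshop}) says each job's pieces run chronologically in the order $M_m, M_{m-1}, \ldots, M_1$; and (\ref{eqn:ri}), (\ref{eqn:Ci}) encode release dates and completion times. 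Conversely, reading off $(t_j^l, p_j^l)$ from any LP solution directly yields a valid \FS\ schedule with completion time $C_j$.

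To handle the objective I would invoke the standard linear-programming treatment of separable piecewise continuous linear functions (as in \cite{DT97}): for each job $j$ I introduce a scalar variable $z_j$ and one linear inequality per breakpoint of $f_j$, obtaining a polynomial-size reformulation since $f$ is polynomial-time computable. Minimizing $\sum_j z_j$ subject to the scheduling constraints and the auxiliary inequalities gives an LP with $\mathcal{O}(nm)$ scheduling variables plus polynomially many auxiliary variables and constraints, which is solvable in polynomial time by the ellipsoid method or an interior-point algorithm. The optimal $(t_j^l, p_j^l)$ then reconstructs an optimal schedule.

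The step I expect to be the main obstacle is the direction showing that every non-delay \FS\ schedule with $C_1 \le \cdots \le C_n$ can be encoded so as to satisfy (\ref{eqn:flowshop}), i.e.\ that each job's pieces are traversed chronologically from $M_m$ down to $M_1$. I would establish this by fixing two pieces of $j$ on $M_l$ and $M_{l'}$ with $l<l'$: throughout $j$'s piece on $M_l$, vertical ordering forces $M_{l'}$ to be idle or to carry a job of index strictly greater than $j$, and combining this with the common machine-wise order forces $j$'s piece on $M_{l'}$ to end before $j$'s piece on $M_l$ begins. Should this direct argument leave residual corner cases (for instance when $M_{l'}$ is idle throughout that interval), Remark~\ref{Rem:NonDelay-VerticalOrder} provides enough freedom to reassign pieces within each vertical slice without altering completion times, showing that the LP-representable subclass of \FS\ schedules is still dominant, which is all that is needed.
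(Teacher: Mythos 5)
Your overall strategy is the paper's: invoke Theorem~\ref{Theorem:PFS} to restrict attention to non-delay \FS\ schedules ordered $1,2,\ldots,n$, and then argue that the linear program (\ref{eqn:proc})--(\ref{eqn:0-bis}) optimizes exactly over (a dominant subclass of) such schedules. The decoding direction (any LP solution yields a feasible \FS\ schedule, so $f^* \geq z^*$) and the treatment of the objective are fine. But there is a genuine gap in the encoding direction, and it is precisely the point on which the paper's proof spends most of its effort. You assert that a non-delay \FS\ schedule with $C_1 \leq \cdots \leq C_n$ satisfies all the constraints, justifying the machine-wise order by saying that ``the last pieces sit on $M_1$.'' That is false in general: a job $j$ may have \emph{no} piece on $M_1$ at all (for instance when $C_{j-1} = C_j$ and job $j-1$ occupies $M_1$ up to that common completion time, vertical ordering forces $j$'s last piece onto a higher-indexed machine). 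In that case constraint~(\ref{eqn:Ci}) reads $C_j = t_j^1$ with $p_j^1 = 0$, and one must show that $t_j^1$ can be set equal to the \emph{true} completion time of $j$ while still satisfying $t_j^2 + p_j^2 \leq t_j^1$ and $t_{j-1}^1 + p_{j-1}^1 \leq t_j^1 \leq t_{j+1}^1$. If this cannot be done, the LP is over-constrained, the fictitious $M_1$-piece inflates $C_j$, and the inequality $f^* \leq z^*$ --- hence the optimality of the LP solution --- is unproven. The paper closes this hole by observing that, since $\sigma$ is vertically ordered and no job $i<j$ finishes after $C_j^{\sigma}$, no job $i>j$ occupies machines $M_1,\ldots,M_q$ before $C_j^{\sigma}$, so all zero-length pieces of $j$ on $M_1,\ldots,M_{q-1}$ can be placed at time $C_j^{\sigma}$ without violating (\ref{eqn:flowshop}) or (\ref{eqn:prec}). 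Your proposal contains no substitute for this argument.

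By contrast, the obstacle you single out as the main one --- that each job's pieces are traversed chronologically from $M_m$ down to $M_1$ --- is comparatively minor: it follows directly from property $A(n)$ established in the proof of Theorem~\ref{Theorem:PFS} (if a piece of $j$ on a lower-indexed machine preceded one on a higher-indexed machine in time, vertical ordering would force that lower-indexed machine to later process some job $i<j$ after a piece of $j$, contradicting $A(n)$). One further caution: the epigraph reformulation you describe (a variable $z_j$ bounded below by one inequality per linear piece, then minimizing $\sum_j z_j$) is valid only when each $f_j$ is convex; for general separable piecewise linear regular functions a more careful modelling in the spirit of \cite{DT97} is needed, as the paper tacitly assumes.
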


\begin{proof}
We use Theorem \ref{Theorem:PFS} with the idea of the proof in ~\cite{BBCDKS07}. Theorem \ref{Theorem:PFS} implies that there exists an optimal non-delay \FS\ solution such that the order of the jobs on each machine is $1,2, \ldots, n$. If we denote by $z^*$ its value, and by $w^*$ the minimum value of a \FS\ solution which verifies order $1,2, \ldots, n$ for the jobs, we have $z^* \leq w^*$.
Now, observe that any solution of the linear program defines a \FS\ schedule. Hence, if $f^*$ is the value of an optimal solution of the linear program, we get $f^* \ge w^*$, that is  $f^* \ge z^*$.
However, in a \FS\ schedule a job $j$ may have a completion time less than $C_j$ because there may be no piece of $j$ on machine $M_1$. So, let us show that the optimal non-delay \FS\ solution (denoted by $\sigma$) verifies the constraints of the linear program.  Suppose there is a job $j$ in $\sigma$  that completes at time $C^{\sigma}_j$ and whose last piece is processed by machines $M_q$ with $q > 1$. 
By Theorem \ref{Theorem:PFS} we know that there is no job $i <j$ such that $C^{\sigma}_i > C^{\sigma}_j$, and also that $\sigma$ is vertically ordered, so no job $i > j$ is scheduled by processors $M_1, M_2, \ldots, M_q$ before time $C^{\sigma}_j$: therefore, we can define a solution of the linear program such that $p^l_j = 0$ for $1 \leq l < q$, that is such that $C_j = C^{\sigma}_j$. By applying this procedure to any job $j$ whose last piece is not processed by machine $M_1$, we get a solution of the linear program of value $z^*$, which implies $z^* \ge f^*$. From $z^* \leq f^*$ we deduce that $f^* = z^*$: the optimal solution of the linear program is also an optimal solution of the problem \pbGenRelDates\ .
\end{proof}

Note that a more general LP formulation is proposed in~\cite{KW12}, since it is dedicated to solve the problem
$Q|r_j,pmtn,D_j,C_1\leq \dots \leq C_n|F$, where $C_1\leq \dots \leq C_n$ means that we are only
looking for an optimal schedule among the class of schedules for which $C_1\leq \dots \leq C_n$ holds.
Nevertheless, notice that the formulation introduced here is interesting for two reasons: first, 
this LP formulation involves only $O(nm)$ variables and $O(nm)$ constraints, whereas the one proposed
in~\cite{KW12} uses $O(n^3m)$ variables and constraints. Secondly, the approach provided in~\cite{KW12}
does not allow us to use the \FS\ structure proposed in this paper.

The next section is dedicated to finding subproblems of \pbGenRelDates\ for which a total order on jobs' completion times can be obtained, in order to conclude that they are solvable in polynomial time.

\section{Ordering the jobs' completion times}\label{sec:order}

Extending notations on the agreeability introduced in~\cite{TNC09},
in the $\beta$-field, we write $(r_j^+, p_j^+)$ if $r_j$'s and $p_j$'s are
in the same order, i.e. $p_1\leq p_2 \leq \dots \leq p_n$.
Note that, for a problem with equal processing times, or without release dates,
this condition is always fulfilled.
This notation can also be used for more than two inputs; for example, $(r_j^+, p_j^+, d_j^+, w_j^-)$
means that $r_j$'s, $p_j$'s and $d_j$'s are in an increasing order whereas $w_j$'s are decreasing.

Under some specific conditions on the objective functions and the input data, it is possible to know the order of the jobs' completion times in an optimal solution:

\begin{Theorem}\label{Theorem:agreeable}
The following problems are solvable in polynomial time:
\begin{enumerate}
\item $P|pmtn, (r_j^+, p_j^+) | \sum f_j$, when $f_j$'s are regular functions and $f_j-f_k$ is non-decreasing if $j<k$.
\item $P|pmtn, (r_j^+, p_j^+) | \max f_j$, when $f_j$'s are regular functions and $f_j-f_k$ is non-negative if $j<k$.
\item $P|pmtn, (r_j^+, p_j^+, w_j^-), d_j=d | \sum  w_j U_j$.
\end{enumerate}


\end{Theorem}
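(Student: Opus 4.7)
The plan is to reduce parts 1 and 2 to Theorem~\ref{Theorem:PL} by showing that the agreeability $(r_j^+, p_j^+)$ together with the stated monotonicity of the $f_j$'s forces the existence of an optimal solution with $C_1^* \leq C_2^* \leq \dots \leq C_n^*$, and to handle part 3 by a direct exchange argument on the on-time set.

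For part 1, I proceed in two stages. First, a \emph{feasibility} lemma: if the completion-time vector $(C_1,\dots,C_n)$ can be achieved by some feasible preemptive schedule on $m$ machines, then the sorted reassignment that uses $C_{(j)}$ (the $j$-th smallest value) as the deadline of job $j$ is also achievable. Under $(r_j^+, p_j^+)$ this follows from the classical preemptive-feasibility characterization on parallel machines: both the individual bound $p_j \leq d_j - r_j$ and the interval-load bound $\sum_{j: r_j\geq a,\, d_j\leq b} p_j \leq m(b-a)$ must hold; the first transfers to $d_j = C_{(j)}$ because any $C_i < r_j + p_j$ must come from an index $i < j$ (of which there are only $j-1$), and the second because the sorted condition counts a set of consecutive indices $\{j_a,\dots,j_b\}$ whose $p$-sum, by the monotonicity of $p_j$ and a pigeonhole comparison with the corresponding set for the original schedule, is no larger than the original $p$-sum.

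Second, the sorted reassignment does not worsen $\sum f_j$: for any inversion $j<k$ with $C_j>C_k$,
\[
[f_j(C_k) + f_k(C_j)] - [f_j(C_j) + f_k(C_k)] = (f_j - f_k)(C_k) - (f_j - f_k)(C_j) \leq 0
\]
because $f_j - f_k$ is non-decreasing and $C_k < C_j$. Iterating yields an optimal schedule with $C_1 \leq \dots \leq C_n$, and Theorem~\ref{Theorem:PL} concludes. Part 2 uses the same feasibility lemma; under the same swap the max objective is non-increasing because both $f_j(C_k) \leq f_j(C_j)$ (monotonicity) and $f_k(C_j) \leq f_j(C_j)$ (hypothesis $f_j\geq f_k$), so the two post-swap contributions are each bounded by the pre-swap value $f_j(C_j)$. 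Theorem~\ref{Theorem:PL} adapts to the min-max criterion via a standard auxiliary-variable reformulation, minimizing $L$ subject to $f_j(C_j) \leq L$.

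For part 3, Theorem~\ref{Theorem:PL} does not apply directly because $f_j(C) = w_j\,[C > d]$ is not continuous. Instead, I show that the on-time set $E = \{j : C_j \leq d\}$ in some optimal solution can be chosen as a prefix $\{1,\dots,k^*\}$: if $j<k$ with $j\notin E$ and $k\in E$, removing $k$ frees $p_k$ units in $[r_k,d]\subseteq[r_j,d]$ into which job $j$ (needing only $p_j\leq p_k$) can be inserted, giving a feasible $E' = (E\setminus\{k\})\cup\{j\}$ whose weight changes by $w_k - w_j \leq 0$. It then suffices to find the largest $k$ for which $\{1,\dots,k\}$ is preemptively schedulable by $d$ on $m$ machines, a polynomially checkable condition (for instance $\sum_{j=i}^{k} p_j \leq m(d-r_i)$ for all $i\leq k$). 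The main obstacle is the feasibility lemma of parts 1 and 2: a naive piece-by-piece swap of two jobs can fail when $r_j<r_k$ strictly, since it may attempt to process a piece of job $k$ before $r_k$; going through the interval-load characterization side-steps this pitfall and makes the argument uniform.
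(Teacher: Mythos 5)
Your overall architecture matches the paper's (exchange argument to get an optimal solution with $C_1^*\leq\dots\leq C_n^*$, then Theorem~\ref{Theorem:PL} for cases 1--2 and a prefix-of-on-time-jobs argument for case 3), and your objective-function comparisons for the three cases are essentially the paper's. But the load-bearing step of your parts 1 and 2 --- the ``feasibility lemma'' --- rests on a false characterization of preemptive feasibility on identical parallel machines. The two conditions you invoke (individual slack $p_j\leq d_j-r_j$ and the fully-contained interval bound $\sum_{j:\,r_j\geq a,\ d_j\leq b}p_j\leq m(b-a)$) are necessary but \emph{not} sufficient when $m\geq 2$, because a single job can occupy only one machine at a time. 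Concretely, take $m=2$, a common deadline $d=5$, and jobs $(r,p)=(0,2),(1,4),(1,4)$: this satisfies $(r_j^+,p_j^+)$ and passes both of your tests for every interval, yet it is infeasible, since the two long jobs must saturate both machines throughout $[1,5]$ and the short job can then collect at most one unit in $[0,1]$. So the implication ``sorted deadlines satisfy the conditions, hence the sorted vector is achievable'' does not follow. The lemma you want is in fact true, but the paper reaches it by a direct piece-level exchange that you dismissed too quickly as ``naive'': all pieces of job $j$ processed before $r_k$ are left in place, time slots where $j$ and $k$ run simultaneously are left unchanged, and only the slots after $r_k$ used by exactly one of the two jobs are redistributed, with $j$'s remaining work packed into the earliest of them. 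Since $p_j\leq p_k$, this yields $C_j'\leq C_k$ and $C_k'=C_j$ with no release-date violation and no self-overlap --- precisely the pitfall you were trying to route around, handled without any feasibility characterization.

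The same incorrect characterization reappears in your part 3 as the proposed polynomial-time test ``$\sum_{j=i}^{k}p_j\leq m(d-r_i)$ for all $i\leq k$'' for whether the prefix $\{1,\dots,k\}$ can be completed by $d$: the counterexample above defeats it (all three sums pass, yet the prefix is not schedulable by $d$), so your algorithm would return a wrong answer. The paper instead tests feasibility of the prefix by the linear program of Theorem~\ref{Theorem:PL} restricted to jobs $1,\dots,k$ with the added constraints $t_j^1+p_j^1\leq d$, which is correct because Theorem~\ref{Theorem:PFS} guarantees a \FS\ schedule exists whenever any feasible ordered schedule does. Your set-exchange argument showing the optimal on-time set may be taken as a prefix is sound in spirit (modulo first deleting all old pieces of the late job $j$ before reinserting it into $k$'s freed slots, to avoid $j$ running on two machines at once), and your auxiliary-variable reformulation for the min-max criterion in part 2 is a reasonable addition. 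To repair the proof, replace the feasibility lemma by the explicit exchange and replace the interval-load test by the LP feasibility test (or a max-flow computation).
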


\begin{proof}
We first show that there exists an optimal schedule such that $C_1^* \leq C_2^* \leq \dots \leq C_n^*$.
Let $S$ be an optimal schedule. For the sake of contradiction, assume that there exist two jobs $j<k$,
such that $r_j \leq r_k$, $p_j \leq p_k$ and $C_j > C_k$, and let us prove that we can find another optimal schedule $S'$ such that $C_j' \leq C_k'$.
This exchange argument is illustrated with Figure~\ref{fig:exchange}.
\begin{figure}
\begin{center}
\includegraphics[scale=0.4]{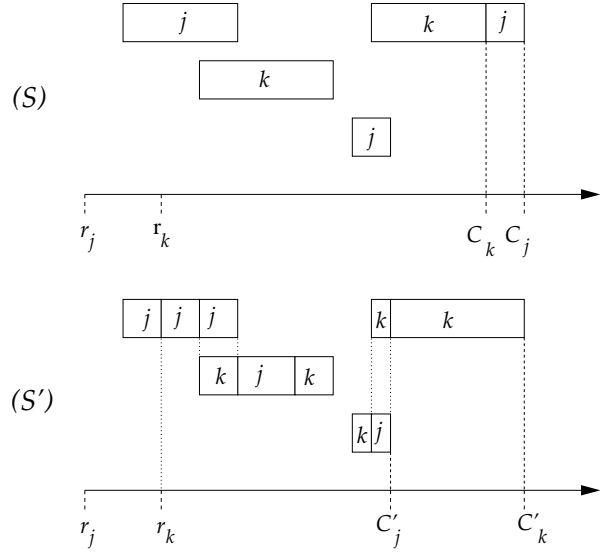}
\caption[]{An optimal solution and the corresponding optimal solution after the exchange.}
\label{fig:exchange}
\end{center}
\end{figure}

The schedule $S'$ is constructed in the following manner: all the pieces of jobs but $j$ and $k$ remain exactly at
the same position.
All pieces of job $j$ processed before $r_k$ stay at the same place, and on any time interval where jobs $j$ and $k$ are both processed in $S$, we schedule them in the same manner in $S'$. For the remaining available slots, starting from time $r_k$, we schedule the remaining part of job $j$ and then the one of $k$.
By construction, we ensure that no overlap exists in $S'$  by scheduling jobs $j$ and $k$ in $S'$ when $j$ and $k$ are simultaneously processed in $S$.
Using the fact that $p_j \leq p_k$, the completions times of $j$ and $k$ are $C_j' \leq C_k$ and $C_k' = C_j$. 

Now, let us consider the three cases:

1. Since $f_j - f_k$ is a non-decreasing function, by considering time points $C_j$ and $C_k$, we can write $(f_j - f_k)(C_j) \geq (f_j-f_k)(C_k)$, which means that $f_j(C_k)+f_k(C_j) \leq f_j(C_j)+f_k(C_k)$.

2. We have $\max (f_j(C_j'),f_k(C_k')) \leq \max (f_j(C_j), f_k(C_j))$ and, using the fact that $f_j-f_k$ is a non-negative function, we can write $\max (f_j(C_j), f_k(C_j)) = f_j(C_j) \leq \max (f_j(C_j),f_k(C_k))$.


3. We have to consider three cases according to the value of $d$; if $d \geq C_j$, we have $U'_j=U_j=0$ and $U'_k=U_k=0$. 
If $C_k > d$ then $U'_j \leq U_j=1$ and $U'_k=U_k=1$. In both cases we get $w_j U'_j+w_k U'_k \leq w_j U_j + w_k U_k$
Finally, if $C_j > d \geq C_k$ then $U'_j=U_k=0$ and $U'_k = U_j=1$, so we get $w_j U'_j+w_k U'_k  = w_k \leq w_j = w_j U_j + w_k U_k$ .

Hence, $S'$ is also optimal in all cases, and there exists an optimal schedule such that $C_1^* \leq C_2^* \leq \dots \leq C_n^*$.
Therefore, problems of cases~1 and~2 are solvable in polynomial time by Theorem \ref{Theorem:PL}. 
For case~3, Theorem~\ref{Theorem:PL} cannot be used because $f_j$'s functions are not continuous. However, we only need to find the first job $k$ that cannot be on time, since jobs $l>k$ will be also late, by Theorem \ref{Theorem:PFS}. This can be done by testing the feasibility of successive modified versions of the linear program of Theorem \ref{Theorem:PL}: for $k= 1, 2, \ldots, n$, only on-time jobs $1 \leq j \leq k$ are considered, and constraints $t^1_j +p^1_j \leq d$, for $1 \leq j \leq k$, are added.  
\end{proof}

These results are used in the next section to derive time-polynomially solvable problems
according to different classical criteria.

\section{Consequences on classical criteria problems}\label{sec:discussion}

Figure \ref{fig:classificationTj} 
shows the new complexity hierarchy of
parallel identical machine scheduling problems with preemption and release dates 
for criteria based on the $\sum w_j T_j$ . Previously minimal \NP-hard and maximal time-polynomially solvable problems are included with their references, and shaded results are the ones proved with 
the unified approach proposed in this paper. 

\begin{figure}[htbp]
	\begin{center}
\includegraphics{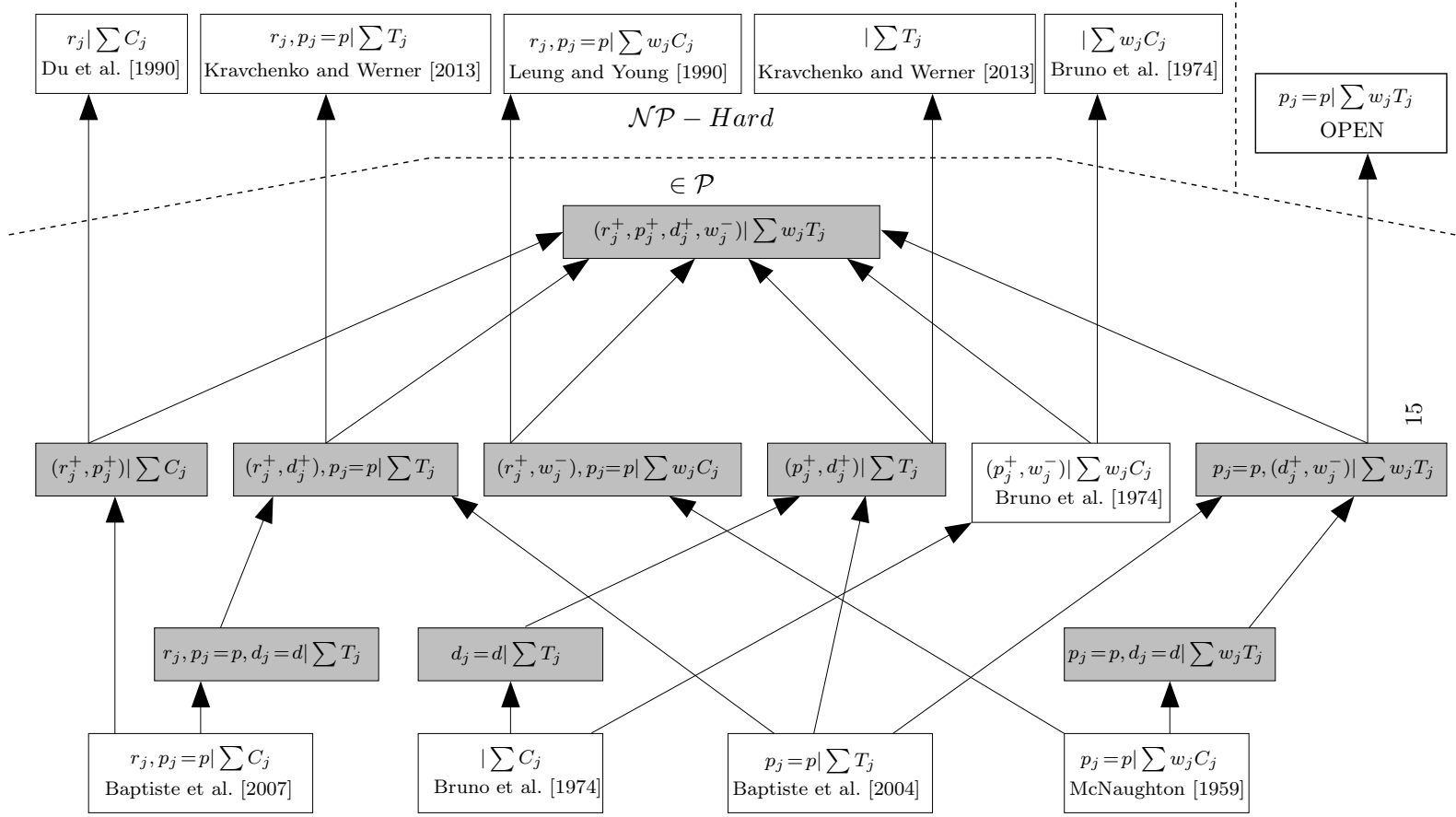}
	\caption{Classification of scheduling problems of type $P|pmtn,r_j|\sum w_j T_j$. ``$P|pmtn,$'' is omitted. An arc means ``is a subproblem of''. In gray, main results provided in this paper with a unified approach. }
	\label{fig:classificationTj}
	\end{center}
\end{figure}


For the four types of input data $r_j, p_j, d_j$ and $w_j$ there exist many agreeable combinations. In order to present a synthetic view, we consider cases where some of the data are constant, i.e. $r_j=0$, $p_j=p$, $d_j\in\{0,d\}$ or $w_j = 1$. 
Thus, we distinguish the two following cases:

a) Three types of data have constant values: all the problems are solvable in polynomial time, by Theorem~\ref{Theorem:agreeable}.
More precisely, if $d_j = 0$ or if $d_j$ is not fixed, these results were already known. It is interesting to notice that some of them were proved more than 30 years ago, whereas  the others are very recent (\cite{BBKT04} and \cite{BBCDKS07}). 
We extend these results and prove the time-polynomiality of the six problems we get by the combinations of parameters and criteria when $dj = d$, which is known in the literature as the 'common due date' case.

b) Two types of data have constant values: if the remaining data verify the condition of Theorem~\ref{Theorem:agreeable}, then the corresponding problem is in \P\; we hence provide a polynomial-time algorithm for different problems with agreeable conditions, for which very few results were known. 
Otherwise, i.e. if the remaining data do not necessary satisfy the condition of Theorem~\ref{Theorem:agreeable}, using a literature review, we observe that the problem is either \NP-hard or open. For flow-time or tardiness criteria, all problems are \NP-hard, except $P|pmtn,p_j=p|\sum w_jT_j$: that is why we conjecture it is \NP-Hard.

Our approach also defines new problems with common due dates for which complexity issues are interesting for criteria related to weighted total number of late jobs.
When we have to deal with common due dates and a criterion based on $\sum w_jU_j$, it is possible
to look at the reverse problem and hence provide directly complexity results.
For example, $P|r_j,d_j=d, pmtn|\sum U_j$ is equivalent by symetry to $P|pmtn|\sum U_j$ and is hence
NP-Hard. In the same way, problem $P|pmtn, r_j, p_j=p,d_j=d|\sum U_j$ is equivalent to $P|pmtn, p_j=p|\sum U_j$, for which a polynomial time algorithm was given in~\cite{BBKT04}.
Our approach leads to a slightly more general result since we show that $P|pmtn,(r_j^+,p_j^+,w_j^-),d_j=d|\sum w_jU_j$ is solvable in polynomial time.

One can also note that our approach does not perform well for a criterion of type $f = \max f_j$. Indeed, we can only prove that $P|pmtn|C_{max}$ and $P|pmtn, p_j=p|L_{max}$ are in \P, whereas it was independently proved in~\cite{LL78} and~\cite{S81} that $R|pmtn, r_j|L_{max}$ can be solved in polynomial time.

\section{Conclusion}\label{sec:conclu}

In this paper we proved that there exists a type of schedules (named \FS) which is dominant for problem \pbGenRelDates\ 
if there exists an optimal solution with completion times scheduled in the same order as the release dates, 
or if there is no release date.
By interchange arguments, we proved that, for a large subclass of these problems, it is possible to order the optimal 
completion time of all jobs. Using these two results, we showed that problems satisfying the condition
of Theorem~\ref{Theorem:agreeable} are polynomially solvable. In particular, we proved the polynomiality of different
problems having agreeable data and/or a common due date $d_j=d$. 
\begin{figure}[htbp]
\begin{center}
\includegraphics[scale=0.4]{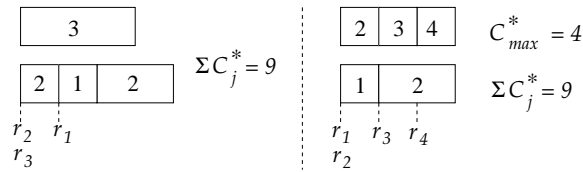}
\caption{Finding a better characterization of \FS\ schedules}
\label{fig:counter}
\end{center}
\end{figure}

An interesting question consists in finding a less restrictive condition for the existence of \FS\ schedules.
Indeed, on the one hand, \FS\ schedules are not dominant if condition $(r_j^+, p_j^+)$ does not hold (see left chart of Figure~\ref{fig:counter}) but, on the other hand, there exist \FS\ schedules that are optimal, even if the condition $(r_j^+, p_j^+)$ is not verified (right chart of Figure~\ref{fig:counter}). Hence, we should seek for a better characterization of the conditions implying the existence of  \FS\ schedules.

Another research avenue lies in using the \FS\ structure in \NP-Hard problems to derive approximation algorithms or to improve resolution methods; indeed, the existence of \FS\ schedules drastically reduces the combinatoric of problems since, whatever the number of machines, there are at most $n!$ orders to test.

Finally, polynomial-time cases for problem \pbGenRelDates\ that were already solved in the literature
are also polynomially solvable when we generalize to the uniform machine case $Q|pmtn,r_j|f$, leading to 
the natural following question: is the \FS\ structure also dominant for some problems in case of uniform machines?

\nocite{K00}
\nocite{LY90}
\nocite{M59}

\section*{Acknowledgements}
We would like to thank the anonymous referees for their
valuable suggestions and constructive comments that improved
this paper.

\bibliographystyle{elsarticle-harv}
\bibliography{Biblio}
\end{document}